\newcommand{\Ao}{{\cal A}_0}
\newcommand{\Sc}{{\cal S}}
\newcommand{\mc}{\mathcal}
\newcommand{\norm}[1]{ \parallel #1 \parallel}
\newcommand{\be}{\begin{equation}}
	\newcommand{\vp}{\varphi}
\newcommand{\en}{\end{equation}}
\newcommand{\bea}{\begin{eqnarray}}
\newcommand{\ena}{\end{eqnarray}}
\newcommand{\beano}{\begin{eqnarray*}}
\newcommand{\enano}{\end{eqnarray*}}
\newcommand{\D}{{\mc D}}
\newcommand{\G}{{\cal G}}
\newcommand{\A}{{\cal A}}
\newcommand{\M}{{\cal M}}
\newcommand{\ltwo}{{\Lc^2(\mathbb{R})}}
\newcommand{\N}{{\mathfrak N}}
\newcommand{\Lc}{{\cal L}}
\newcommand{\1}{1 \!\! 1}
\newcommand{\Hil}{\mc H}
\newtheorem{thm}{Theorem}
\newtheorem{lemma}[thm]{Lemma}
\newtheorem{prop}[thm]{Proposition}
\newtheorem{defn}[thm]{Definition}
\newtheorem{exe}[thm]{Example}
\newenvironment{proof}{\noindent {\bf Proof:}}{\hfill$\Box$}
\def\theequation{\arabic{section}.\arabic{equation}}
\begin{document}

\begin{center}
{\Large \textbf{Sesquilinear forms as eigenvectors in quasi *-algebras, with an application to ladder elements}} \vspace{1.2cm%
}\\[0pt]

{\large F. Bagarello}
\vspace{2mm}\\[0pt]
Dipartimento di Ingegneria,\\[0pt]
Universit\`{a} di Palermo, I - 90128 Palermo,\\
and I.N.F.N., Sezione di Catania\\
E-mail: fabio.bagarello@unipa.it\\

\vspace{0.3cm}

{\large H. Inoue}
\vspace{2mm}\\[0pt]
Department of Economics,\\[0pt]
 Kyushu Sangyo University, Fukuoka, Japan,\\
E-mail:  h-inoue@ip.kyusan-u.ac.jp\\

\vspace{3mm}

{\large S. Triolo}
\vspace{2mm}\\[0pt]
Dipartimento di Ingegneria,\\[0pt]
Universit\`{a} di Palermo, I - 90128 Palermo,\\
E-mail: salvatore.triolo@unipa.it\\

\vspace{1mm}

\end{center}

\vspace*{1.1cm}

\begin{abstract}
\noindent We consider a particular class of sesquilinear forms on a  {Banach quasi *-algebra} $(\A[\|.\|],\Ao[\|.\|_0])$ which we call {\em eigenstates of an element} $a\in\A$, and we deduce some of their properties. We further apply our definition to a family of ladder elements, i.e. elements of $\A$ obeying certain commutation relations physically motivated, and we discuss several results, including orthogonality and biorthogonality of the forms, via GNS-representation.

\end{abstract}

\vspace*{.4cm}

{\bf Keywords:--}  Quasi *-algebras; Sesquilinear forms; Eigenvectors; Ladder elements 

\vspace{2mm}
{\bf Corresponding author:--} Fabio Bagarello

\vfill

\newpage


\section{Introduction and preliminaries}\label{sect1}

The role of eigenstates it is quite well known both in pure and in applied mathematics, and it is particularly relevant in quantum mechanics. In this case, for instance, eigenvectors of a given Hamiltonian (which usually describes the energy of a given physical system $\G$) are typically interpreted as the stationary states of $\G$. This means that, if $\G$ is {\em prepared} in one specific eigenstate at the initial time $t=0$, then, in absence of other effects, $\G$ {\em stays in that state}. This is one of the (physical) reasons why the properties of the set of all the eigenstates of a given operator have been studied at length during the past decades. A more mathematical reason for this interest is that, in many situations, this set spans the whole Hilbert space where $\G$ is defined, \cite{hall,mess}. Quite generally, in the literature, eigenstates are meant to be vectors in the Hilbert space. Recently, \cite{denittis}, the authors considered the notion of eigenvectors in a C*-algebraic settings, replacing vectors with linear functional. C*-algebras are often used in connection with quantum systems with infinite degrees of freedom in view of the possibility of using the same underlying structure to describe different physical situations, using inequivalent representations of the abstract C*-algebra describing the system,  \cite{br}-\cite{sewbook2} . 

In this paper we further extend the notion of eigenvectors to  Banach quasi *-algebra, by using sesquilinear forms, \cite{mariacamillo}. This extension can be useful in presence of unbounded operators, which often appears when dealing with concrete physical systems, \cite{aitbook,bag1,schu}, and which are not so naturally analyzed in C*-algebras. 

The paper is organized as follows: after some preliminaries, we introduce our definition of eigenstates for sesquilinear forms, and we deduce several properties. In particular we show that from any such form it is possible to define a second form which is again an eigenstate, but of a different operator. Section \ref{sect3} is devoted to a detailed example based on a class of ladder operators known as quons, \cite{moh,fiv,gree}. In particular we show how a family of eigenstates of a quonic number-like operator $n_0$ can be defined, and we also discuss the {\em ortogonality} of these states, after explaining what orthogonality is for us, in our context. We both consider the cases of $n_0=n_0^*$, and what happens if this equality does not hold. In this latter case we show that a sort of biorthogonality between forms can be introduced, if some suitable conditions are satisfied. Section \ref{sect4} contains our conclusions. To keep the paper self-contained, in the Appendix we include some basic facts on quons, relevant for what we will do in Section \ref{sect3}.

\subsection{Basic notions}
\vspace{1ex} We briefly recall here some definitions and facts
needed in the sequel.
 Let $\A$ be a complex vector space and $\Ao$ a  $^\ast$ -algebra contained in $\A$. $\A$ is said
a \textit{quasi  $^\ast$-algebra with distinguished  $^\ast$-algebra $\Ao$} (or, simply, over $\Ao$) if
\begin{itemize}\item[(i)]\label{11} the left multiplication $ax$ and the right multiplication $
	xa$ of an element $a$ of $\A$ and an element $x$ of $\A_0$ which
	extend the multiplication of $\A_0$ are always defined and
	bilinear; \item[(ii)] $x_1 (x_2 a)= (x_1x_2 )a$ and $x_1(a
	x_2)= (x_1 a) x_2$, for each $x_1, x_2 \in \A_0$ and $a \in \A$;
	
	\item[(iii)] an involution $*$ which extends the involution of $\A_0$
	is defined in $\A$ with the property $(ax)^*= x^*a^*$ and $(xa)^
	* =a^* x^*$ for each $x \in \A_0$ and $a \in \A$.
\end{itemize}

 We say that $(\A, \A_{\scriptscriptstyle 0})$ has an {\it unit} if there exists an
element $e\in \A_{\scriptscriptstyle 0}$ such that {$ae=ea=a$, for every $a \in \A$.}

A quasi *-algebra $(\A, \A_{\scriptscriptstyle 0})$ is said to
be {\it locally convex} if $\A$ is endowed with a topology $\tau$
which makes of $\A$ a locally convex space and such that the
involution $a \mapsto a^*$ and the multiplications {$\A\ni a \mapsto ax$,
$a \mapsto xa$, $x \in \A_{\scriptscriptstyle 0}$,} are continuous. If $\tau$ is a norm
topology and the involution is isometric with respect to the norm,
we say that $(\A[\|.\|],\Ao[\|.\|_0])$ is a {\it normed quasi *-algebra}   and,
if it is complete, we say it is a {\it Banach quasi*-algebra}, {\cite{aitbook}.}

For instance, let $\A_{\scriptscriptstyle 0}$ be a $C^\ast $-algebra, with norm $\|\cdot\|_0$ and
	involution $^*.$ Let $\| \cdot \|$
	be a norm on $\A_{\scriptscriptstyle 0}$, weaker than $\| \cdot \|_0$
	and such that, for every {$a,b \in  \A_0$}
	\begin{itemize}
		\item[(i)]  $\|ab\| \leq \|a\| \|b\|_0,$
		\item[(ii)]  {$ \|a^*\|=\|a\|.$}
	\end{itemize}
	Let $\A$ denote the $\| \cdot \|$-completion of $\A_{\scriptscriptstyle 0}$; then  the pair $(\A,\A_{\scriptscriptstyle 0})$ is called a (proper) {\em CQ*-algebra}, first introduced in \cite{fc}, which is a particular example of Banach quasi*-algebra.

As an example, the pair
$(L^p([0,1]), L^\infty([0,1])$, $1\leq p< +\infty$, may be regarded as an abelian
CQ*-algebra as well as a  Banach quasi *-algebra.

A second example is given by a non-commutative version of the previous one: let $\M$ be a von Neumann algebra and $\tau$ a normal faithful
semifinite trace defined on $\M_+$. For each $p\geq 1$, let
$${\mc J_p}=\{ X \in \M: \tau(|X|^p)<\infty \}.$$ Then ${\mc J_p}$ is a
*-ideal of $\M$. We denote with
$L^p(\tau)$ the Banach space completion of ${\mc J_p}$ with respect
to the norm
$$\|X\|_p := \tau(|X|^p)^{1/p}, \quad X \in {\mc J_p}.$$
We further call, as it is usually done in the literature, $L^\infty(\tau) = \M$. The pair $(L^p(\tau), L^\infty(\tau)\cap L^p(\tau))$  { is a non commutative Banach quasi *-algebra, \cite{cqellepi,nb}.} 

\section{Eigenvectors}\label{sect2}

Let $(\A[\|.\|],\Ao[\|.\|_0])$ be a Banach quasi *-algebra with identity $e\in\Ao$, and let $\varphi(.,.)$ be a sesquilinear form on it satisfying the following conditions:
\be\left\{
\begin{array}{ll}
\varphi(c,\alpha a+\beta b)=\alpha\varphi(c,a)+\beta\varphi(c,b),\\
\varphi(a,a)\geq0,\\
\varphi(ax,y)=\varphi(x,a^*y),
\end{array}
\right.
\label{21}\en
where $a, b, c\in\A$, $x,y\in\Ao$, and $\alpha,\beta\in\mathbb{C}$. Because of the positivity of $\varphi$, we automatically have
\be\left\{
\begin{array}{ll}
	\varphi(a,b)=\overline{\varphi(b,a)},\\
	|\varphi(a,b)|^2\leq\varphi(a,a)\varphi(b,b),
\end{array}
\right.
\label{22}\en
$\forall a,b\in\A$. We call $\Sc$ the set of sesquilinear forms satisfying (\ref{21}), and therefore (\ref{22}). It is clear that we can always assume that $\varphi(e,e)=1$. Indeed, if $\varphi(e,e)\neq1$, we can still define a new sesquilinear form, $\varphi_e$, as $\varphi_e(a,b)=\frac{\varphi(a,b)}{\varphi(e,e)}$, $a,b\in\A$, at least if $\varphi(e,e)\neq0$. Hence $\varphi_e(e,e)=1$, and $\varphi_e\in\Sc$. We refer to \cite{mariacamillo} for several details on sesquilinear forms on quasi *-algebras.

In what follows we will often make use of the following continuity condition:
\be
|\varphi(a,b)|\leq\gamma_\varphi\|a\|\|b\|
\label{23}\en
$\forall a,b\in\A$. Here $\gamma_\varphi$ is a strictly positive constant independent of $a$ and $b$. We call $\Sc_c$ the set of all the elements of $\Sc$ satisfying (\ref{23}), and we further introduce the set $\Sc_{c_0}=\{\varphi\in\Sc_c: \gamma_\varphi=1\}$.

\begin{prop}\label{prop0}
	Let $\varphi\in\Sc_c$, and $x\in\Ao$. Then calling
	\be
	\varphi_x(a,b)=\varphi(ax,bx),
	\label{20}\en
	$a,b\in\A$, we have that $\varphi_x\in\Sc_c$. In particular, if $\varphi\in\Sc_{c_0}$ and $\|x\|_0\leq1$, then $\varphi_x\in\Sc_{c_0}$.
\end{prop}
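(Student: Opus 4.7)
The plan is to verify, in order, the three defining axioms of $\Sc$ in (\ref{21}) for $\varphi_x$, then the continuity estimate (\ref{23}), and finally to specialize to the normalized case $\Sc_{c_0}$.

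First I would handle membership in $\Sc$. Linearity of $\varphi_x$ in the second slot is inherited from linearity of the right multiplication $b\mapsto bx$ (axiom (i) of the quasi $^*$-algebra) composed with linearity of $\varphi$ in its second argument; positivity is immediate since $\varphi_x(a,a)=\varphi(ax,ax)\geq 0$. The one nontrivial point is the invariance axiom $\varphi_x(\alpha y,z)=\varphi_x(y,\alpha^* z)$ for $\alpha\in\A$, $y,z\in\Ao$. I would apply axiom (ii) to rewrite $(\alpha y)x=\alpha(yx)$, note that $yx, zx\in\Ao$ because $\Ao$ is a $^*$-algebra, and then use the invariance of $\varphi$ to get
\[
\varphi_x(\alpha y,z)=\varphi\bigl(\alpha(yx),zx\bigr)=\varphi\bigl(yx,\alpha^*(zx)\bigr)=\varphi\bigl(yx,(\alpha^* z)x\bigr)=\varphi_x(y,\alpha^* z),
\]
where the third equality invokes axiom (ii) once more.

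For the continuity bound I would invoke the Banach quasi $^*$-algebra estimate $\|ax\|\leq \|a\|\,\|x\|_0$ for $a\in\A$, $x\in\Ao$, which expresses the continuity of right multiplication by $\Ao$-elements in the quantitative form used throughout the paper. Then
\[
|\varphi_x(a,b)|=|\varphi(ax,bx)|\leq \gamma_\varphi\,\|ax\|\,\|bx\|\leq \gamma_\varphi\,\|x\|_0^{\,2}\,\|a\|\,\|b\|,
\]
so $\varphi_x\in\Sc_c$ with admissible constant $\gamma_{\varphi_x}=\gamma_\varphi\|x\|_0^{\,2}$. Taking $\varphi\in\Sc_{c_0}$ (i.e.\ $\gamma_\varphi=1$) and $\|x\|_0\leq 1$ then forces $\gamma_{\varphi_x}\leq 1$, placing $\varphi_x$ in $\Sc_{c_0}$.

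I do not expect any substantive obstacle here: the statement is really a bookkeeping exercise. The only step that requires a little care is the invariance condition, where one must track which factors lie in $\Ao$ and which in $\A$, and apply axiom (ii) in the correct direction so that the moved factor $\alpha$ sits next to an element of $\Ao$ and the invariance of $\varphi$ becomes legitimately applicable.
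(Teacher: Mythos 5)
Your proof is correct and the continuity estimate is exactly the paper's argument: the paper also bounds $|\varphi(ax,bx)|\leq\gamma_\varphi\|ax\|\,\|bx\|\leq\gamma_\varphi\|x\|_0^2\|a\|\,\|b\|$ and reads off the normalized case. The only difference is that you additionally write out the verification of the three axioms in (\ref{21}) (correctly, including the associativity bookkeeping for the invariance condition), whereas the paper dismisses that part as routine and proves only the continuity.
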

\begin{proof}
	We only proof the continuity of $\varphi_x$. For instance we have, using (\ref{23}) and the inequalities $\|ax\|\leq\|a\|\|x\|_0$ and $\|bx\|\leq\|b\|\|x\|_0$, 
	$$
	|\varphi_x(a,b)|=|\varphi(ax,bx)|\leq \gamma_\varphi\|ax\|\|bx\|=(\gamma_\varphi\|x\|_0^2)\|a\|\|b\|,
	$$
	for all $a,b\in\A$. It is clear that, in particular, $|\varphi_x(a,b)|\leq \|a\|\|b\|$ if $\gamma_\varphi=1$ and $\|x\|_0\leq1$.
	
\end{proof}

\begin{exe}\label{example}

If we consider the {{\it Banach quasi *-algebra}} $(L^p(\tau), L^\infty(\tau)\cap L^p(\tau)),$ it is not hard to produce examples of sesquilinear forms satisfying (\ref{21}) and (\ref{23}). For that we start introducing the set
	$${\mathcal{B}_+^p}=\{X\in L^{p/(p-2)}(\tau): \, X\geq0\, , \, \, \norm{X}_{p/(p-2)}\leq 1\} $$
	where $p>2$, with the agreement that $ p/{(p-2)}=\infty\,$ if $p=2.$
	
	For each $W \in {\mc B}_+^p$, we consider the right multiplication operator
	$$ R_{\scriptscriptstyle W}:L^p(\tau) \to L^{p'}(\tau); \hspace{8mm} R_{\scriptscriptstyle W} X= XW, \quad X \in L^p(\tau),$$ {where $\frac{1}{p}+\frac{1}{p'}=1$.}
	In this conditions for every  $p\geq2$ and 
	$W\in {\mathcal{B}_+^p}$, the sesquilinear form
	{$\vp(X,Y)=\tau[Y(R_{\scriptscriptstyle W}X){^\ast}] $} is an
	element of $\Sc$. In fact:

	\noindent for every $X\in L^p(\tau)$ we have $$\vp(X,X)=\tau
	[X(R_{\scriptscriptstyle W}X)^\ast]=\tau [X(XW)^{\ast}]=\tau
	[(XW)^{\ast}X]=\tau [W|X|^2]\geq 0.$$

	For every $X \in L^p(\tau)$,
	$A ,B\in$ { $L^\infty(\tau)\cap L^p(\tau)$} we get
	$$\vp(XA,B)=\tau (B(XAW)^{\ast})=\tau
	(BW^{\ast}A^{\ast}X^{\ast})=\tau(X^{\ast}BW^{\ast}A^{\ast})=\tau(X^*B(AW)^*)=\vp(A,X^
	{\ast}B).$$
	Finally, for every $ X,Y\in L^p(\tau)$,
	$$|\vp(X,Y)|\leq\norm{X}_p\norm{Y}_{p}\norm{W}_{p/p-2}\leq
	\norm{X}_p\norm{Y}_p.$$ 
	
	Then $\varphi\in\Sc_{c_0}$, in particular. Of course, if we relax condition $ \norm{W}_{p/(p-2)}\leq 1$, we can still check that $\varphi\in\Sc_c$.
\end{exe}

\begin{defn}\label{def1}
	Let $\varphi\in\Sc$ and let $a\in\A$. We say that $\varphi$ is an eigenstate of $a$ with eigenvalue $\lambda\in\mathbb{C}$, if 
	\be
	\varphi(b,a)=\lambda\varphi(b,e),
	\label{24}\en
	$\forall b\in\A$.
\end{defn}

It follows that $\varphi(e,a)=\lambda\varphi(e,e)=\lambda$, and therefore $\varphi(b,a)=\varphi(b,e)\varphi(e,a)$, $\forall b\in\A$. This could be seen as a sort of decomposition property of $\varphi$, which however only holds when $\varphi$ is an eigenstate of $a$, but not in general. 
Indeed we can easily check the following result:

\begin{lemma}\label{lemma1}
	Let $\varphi\in\Sc$ and let $a\in\A$. Then $\varphi$ is an {eigenstate of $a$} if and only if $\varphi(b,a)=\varphi(b,e)\varphi(e,a)$, $\forall b\in\A$.
\end{lemma}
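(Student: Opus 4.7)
The plan is to unpack Definition \ref{def1} and use the test element $b=e$ to identify the eigenvalue as $\varphi(e,a)$. Both implications reduce to a one-line substitution; essentially the content of the forward direction is already spelled out in the paragraph preceding the lemma, and the converse is obtained by reading that argument backwards.

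For the forward direction, I assume $\varphi$ is an eigenstate of $a$ with eigenvalue $\lambda$, so that $\varphi(b,a)=\lambda\varphi(b,e)$ for every $b\in\A$. Specializing to $b=e$ and using the standing normalization $\varphi(e,e)=1$ (whose justification was already given via the rescaled form $\varphi_e$), I obtain $\lambda=\varphi(e,a)$. Substituting this expression for $\lambda$ into $\varphi(b,a)=\lambda\varphi(b,e)$ yields
\[
\varphi(b,a)=\varphi(e,a)\,\varphi(b,e)=\varphi(b,e)\,\varphi(e,a),
\]
for all $b\in\A$, as required.

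For the converse, assume $\varphi(b,a)=\varphi(b,e)\,\varphi(e,a)$ for every $b\in\A$. Since $\varphi(e,a)\in\mathbb{C}$, setting $\lambda:=\varphi(e,a)$ gives $\varphi(b,a)=\lambda\,\varphi(b,e)$ for every $b\in\A$, which is exactly the defining relation (\ref{24}) in Definition \ref{def1}. Hence $\varphi$ is an eigenstate of $a$ with eigenvalue $\lambda=\varphi(e,a)$.

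There is no real obstacle here: the statement is essentially a tautology once one observes that the number $\lambda$ in Definition \ref{def1} is forced to equal $\varphi(e,a)$ by plugging $b=e$. The only subtlety worth flagging is the use of the convention $\varphi(e,e)=1$; without it the forward direction would only give $\lambda\,\varphi(e,e)=\varphi(e,a)$, but since any nonzero $\varphi\in\Sc$ can be rescaled to satisfy $\varphi(e,e)=1$, no generality is lost.
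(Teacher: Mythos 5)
Your proof is correct and follows exactly the route the paper intends: plug in $b=e$ to identify $\lambda=\varphi(e,a)$ via the normalization $\varphi(e,e)=1$, substitute for the forward direction, and read the same computation backwards for the converse. The paper does not write out a separate proof of Lemma~\ref{lemma1} but relegates it to the paragraph immediately preceding the statement, which contains precisely this argument.
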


Going back to Example \ref{example} {with $p=2$}, it is easy to construct an example of a sesquilinear form satisfying Definition \ref{def1}: {{\it in the Banach quasi *-algebra} $(L^2(\tau), L^\infty(\tau)\cap L^2(\tau)),$
if { $W\in  L^2(\tau)\cap {\mc B}_+^2$,} is a projection  then   $\vp(X,Y)=\tau[Y(R_{\scriptscriptstyle W}X){^\ast}]\in\Sc.$
For each $\alpha\in\mathbb{C}$  we now consider the operator $A=\alpha W$
$$\vp(X,A)=\tau[\alpha W(R_{\scriptscriptstyle W}X){^\ast}]= \tau[(\alpha R_{\scriptscriptstyle W}X){^\ast}]= { \alpha \varphi(X,e).}$$
Thus $\varphi$ is an eigenstate of $A$ with eigenvalue $\alpha.$

Next we have the following 

\begin{prop}\label{prop1}
	Let $\varphi\in\Sc$ and let $a\in\A$. Then $\varphi$ is an eigenstate of $a$ with eigenvalue $\lambda\in\mathbb{C}$ if and only if
	\be
	\varphi(a-\lambda e,a-\lambda e)=0.
	\label{25}\en
\end{prop}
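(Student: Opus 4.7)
The plan is a short direct argument that exploits sesquilinearity plus the Cauchy--Schwarz inequality (\ref{22}).

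For the forward direction, I will simply expand the quadratic form. Assuming $\varphi$ is an eigenstate of $a$ with eigenvalue $\lambda$, setting $b=e$ in (\ref{24}) gives $\varphi(e,a)=\lambda\,\varphi(e,e)$, and conjugate symmetry then yields $\varphi(a,e)=\overline{\varphi(e,a)}=\bar\lambda\,\varphi(e,e)$. Taking $b=a$ in (\ref{24}) gives $\varphi(a,a)=\lambda\,\varphi(a,e)=|\lambda|^2\varphi(e,e)$. Substituting into the expansion
\[
\varphi(a-\lambda e,a-\lambda e)=\varphi(a,a)-\lambda\,\varphi(a,e)-\bar\lambda\,\varphi(e,a)+|\lambda|^2\varphi(e,e)
\]
makes the four terms cancel and gives $0$. (The normalisation $\varphi(e,e)=1$ is not essential here: the computation goes through for every positive value of $\varphi(e,e)$.)

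For the converse, the key observation is that $\varphi$ being positive and sesquilinear satisfies the Cauchy--Schwarz inequality in (\ref{22}) for \emph{all} elements of $\A$, not merely those in $\Ao$. So from $\varphi(a-\lambda e,a-\lambda e)=0$ I get, for every $b\in\A$,
\[
|\varphi(b,a-\lambda e)|^{2}\leq\varphi(b,b)\,\varphi(a-\lambda e,a-\lambda e)=0,
\]
hence $\varphi(b,a-\lambda e)=0$. Linearity in the second entry then yields $\varphi(b,a)=\lambda\,\varphi(b,e)$ for all $b\in\A$, which is precisely Definition \ref{def1}.

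There is no real obstacle: the statement is essentially the familiar Hilbert-space fact ``$Ax=\lambda x\iff\|(A-\lambda)x\|=0$'' transported to sesquilinear forms, and the only point requiring a tiny bit of care is that Cauchy--Schwarz must be applied with the generic element $b\in\A$ (and with $a-\lambda e\in\A$), which is exactly the content of (\ref{22}). No continuity assumption on $\varphi$ is needed for either implication.
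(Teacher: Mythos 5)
Your proof is correct and follows essentially the same route as the paper: the converse is word-for-word the paper's argument (Cauchy--Schwarz from (\ref{22}) applied to $b$ and $a-\lambda e$, then linearity in the second entry). The only cosmetic difference is in the forward direction, where the paper observes that the eigenstate condition gives $\varphi(b,a-\lambda e)=0$ for all $b\in\A$ and then substitutes $b=a-\lambda e$, while you expand $\varphi(a-\lambda e,a-\lambda e)$ into four terms and cancel; both are one-line consequences of sesquilinearity and (\ref{24}).
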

\begin{proof}
	Let $\varphi$ be an eigenstate of $a$ with eigenvalue $\lambda\in\mathbb{C}$. Hence, from (\ref{24}), $\varphi(b,a-\lambda e)=0$, $\forall b\in\A$. In particular, this must be true if $b=a-\lambda e$, so that (\ref{25}) follows.
	
	Vice-versa, if (\ref{25}) holds true, (\ref{22}) implies that
	$$
	|\varphi(b,a-\lambda e)|\leq \varphi(b,b)\varphi(a-\lambda e,a-\lambda e)=0,
	$$
	so that $\varphi(b,a-\lambda e)=0$, $\forall b\in\A$. Hence $\varphi$ is an eigenstate of $a$ with eigenvalue $\lambda\in\mathbb{C}$.
	
\end{proof}

Suppose now that we have two different sesquilinear forms $\varphi_1,\varphi_2\in\Sc$ which are both eigenvectors of a given $a\in\A$, with the same eigenvalue $\lambda$. Then the new form {$\varphi(b,c)=q\varphi_1(b,c)+(1-q)\varphi_2(b,c)$,} $q\in[0,1]$ and $b,c\in\A$, is still an element of $\Sc$, and it is also an eigenvector of $a$ with the same eigenvalue $\lambda$. Moreover, if $\varphi_1,\varphi_2\in\Sc_c$, then $\varphi\in\Sc_c$ and, in particular, if $\varphi_1,\varphi_2\in\Sc_{c_0}$, then $\varphi\in\Sc_{c_0}$. Indeed, if $\varphi_1, \varphi_2$ both satisfy (\ref{21}), then $\varphi$ satisfies the same properties. Moreover, since $|\varphi_j(b,c)|\leq\gamma_{\varphi_j}\|b\|\|c\|,$ $j=1,2$, we have
$$
|\varphi(b,c)|\leq q |\varphi_1(b,c)|+(1-q)|\varphi_2(b,c)|\leq \left(q \gamma_{\varphi_1}+(1-q)\gamma_{\varphi_2}\right)\|b\|\|c\|,
$$
which is of the form (\ref{23}). If we further assume that $\varphi_1,\varphi_2\in\Sc_{c_0}$, then $\gamma_{\varphi_1}=\gamma_{\varphi_2}=1$, so that $q \gamma_{\varphi_1}+(1-q)\gamma_{\varphi_2}=1$. Hence $\varphi\in\Sc_{c_0}$. Finally, since both $\varphi_1$ and $\varphi_2$ satisfy (\ref{24}), we conclude that
$$
\varphi(b,a)=q\varphi_1(b,a)+(1-q)\varphi_2(b,a)=\lambda\left(q\varphi_1(b,e)+(1-q)\varphi_2(b,e)\right)=\lambda\varphi(b,e),
$$
which shows that, as stated, $\varphi$ is again an eigenstate of $a$ with eigenvalue $\lambda\in\mathbb{C}$. It is clear that these results can be extended to an arbitrary number of elements of, for instance, {$\Sc_{c_0}$:} if $\varphi_j\in\Sc_{c_0}$, $j=1,2,\ldots,N$, and if we have $\{q_j\in[0,1], \,j=1,2,\ldots,n\}$, with $\sum_{j=1}^nq_j=1$, then $\varphi=\sum_{j=1}^nq_j\varphi_j\in\Sc_{c_0}$. Moreover, if each $\varphi_j$ is an eigenstate of $a\in\A$ with eigenvalue $\lambda$, then $\varphi$ is an eigenstate of $a\in\A$ with eigenvalue $\lambda$ as well\footnote{Of course this latter properties does not depend on condition $\sum_{j=1}^nq_j=1$. }.

\begin{defn}\label{def2}
	Let us consider $\varphi_j\in\Sc$, $j=1,2,\ldots,n$. They are linearly independent if $\sum_{j=1}^n\alpha_j\varphi_j=0$ if and only if $\alpha_j=0$, $\forall j$.
\end{defn}
 It might be useful to recall that $\sum_{j=1}^n\alpha_j\varphi_j=0$ means that $\sum_{j=1}^n\alpha_j\varphi_j(a,b)=0$, $\forall a,b\in\A$. Following \cite{denittis} we prove the following:
 
 \begin{prop}\label{prop2}
 Let us consider $\varphi_j\in\Sc$, $j=1,2,\ldots,n$, different eigenstates of $a\in\A$, with $n$ different eigenvalues $\lambda_j$. Then these forms are linearly independent.
 \end{prop}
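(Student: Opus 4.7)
The approach is to proceed by induction on $n$, following the classical argument that eigenvectors corresponding to distinct eigenvalues are linearly independent. The base case $n = 1$ is immediate from the normalization $\varphi_1(e, e) = 1 \neq 0$: the identity $\alpha_1 \varphi_1 \equiv 0$ forces $\alpha_1 = 0$ upon evaluation at $(e, e)$.

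For the inductive step, assume the statement for any family of $n - 1$ eigenstates with pairwise distinct eigenvalues, and suppose $\sum_{j=1}^{n} \alpha_j \varphi_j \equiv 0$ on $\A \times \A$. Evaluating at the pair $(b, a - \lambda_n e)$ for arbitrary $b \in \A$, and using Definition~\ref{def1} to rewrite $\varphi_j(b, a - \lambda_n e) = (\lambda_j - \lambda_n) \varphi_j(b, e)$, the contribution $j = n$ drops out and one obtains
\[
\sum_{j=1}^{n-1} \alpha_j (\lambda_j - \lambda_n) \varphi_j(b, e) = 0 \qquad \forall\, b \in \A.
\]

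To turn this into $\alpha_j = 0$ for $j \leq n-1$, the natural move is to strengthen the inductive hypothesis to the assertion that the linear functionals $L_j : b \mapsto \varphi_j(b, e)$ are themselves linearly independent (which plainly implies linear independence of the forms). The stronger claim is proved by the parallel induction: starting from $\sum_{j=1}^{n} \beta_j L_j \equiv 0$, substitute $b = a^* x$ with $x \in \Ao$ (a legitimate element of $\A$). Property~(iii) of Eq.~(\ref{21}) then yields
\[
L_j(a^* x) = \varphi_j(a^* x, e) = \varphi_j(x, a e) = \varphi_j(x, a) = \lambda_j \, L_j(x),
\]
so $\sum_{j=1}^n \beta_j \lambda_j L_j(x) = 0$ on $\Ao$. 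Subtracting $\lambda_n$ times the original relation annihilates the $j = n$ term, and the reduced equation $\sum_{j<n} \beta_j (\lambda_j - \lambda_n) L_j(x) = 0$ closes the induction after invoking the case $n - 1$; the coefficients $\beta_j (\lambda_j - \lambda_n)$ vanish, hence $\beta_j = 0$ for $j < n$, and finally $\beta_n L_n \equiv 0$ with $L_n(e) = 1$ gives $\beta_n = 0$.

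The principal obstacle is this passage from a form-level identity to a functional-level one: the substitution $c = a - \lambda_n e$ produces information only on the slice $c = e$, so a direct induction on the proposition as stated is insufficient. The quasi *-algebra axioms -- in particular rule~(iii) of Eq.~(\ref{21}) -- supply precisely the substitution $b = a^* x$ needed to extract the eigenvalue $\lambda_j$ from $L_j$, making the inductive reduction possible. A secondary technicality is that the reduced relation is naturally produced on $\Ao$ rather than on all of $\A$; this is bridged either by observing that the form-level derivation directly yields the reduced relation on $\A$ (since evaluating the original form identity at $(b, a - \lambda_n e)$ is valid for every $b \in \A$), or by invoking continuity of the forms so that identities on the dense subspace $\Ao$ extend to $\A$.
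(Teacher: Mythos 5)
The paper never actually proves Proposition \ref{prop2}: it defers to the C*-algebraic argument of \cite{denittis}, where one evaluates $\sum_j\alpha_j\omega_j$ on the powers $a^k$ and concludes by a Vandermonde-type argument. Your attempt to reconstruct a proof is therefore going beyond the text, and it stumbles exactly where the quasi *-algebra setting differs from the C*-algebra one: products of two elements of $\A$ (in particular the powers $a^k$, $k\geq 2$) need not be defined, and your substitute for them does not iterate.

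Concretely, the induction on the functionals $L_j=\varphi_j(\cdot,e)$ does not close. The extraction identity $L_j(a^*x)=\varphi_j(x,a)=\lambda_jL_j(x)$ is correct, but only for $x\in\Ao$, since it rests on (\ref{21})$_3$. So from $\sum_j\beta_jL_j=0$ on $\A$ one application of the substitution yields $\sum_{j<n}\beta_j(\lambda_j-\lambda_n)L_j=0$ \emph{only on} $\Ao$; to invoke the inductive hypothesis you would need this identity on $\A$, or else you would need to substitute $x\mapsto a^*x$ again \emph{inside} $\Ao$, which is impossible because $a^*x\notin\Ao$ for generic $a\in\A$. Your first proposed bridge (returning to the two-slot identity and plugging $c=a-\lambda_ne$) only helps at the very first reduction: once the second slot is consumed, every further eigenvalue extraction must pass through the first slot and hence through $\Ao$. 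Your second bridge (continuity plus density of $\Ao$) does close the induction, but continuity is not among the hypotheses: the proposition is stated for $\varphi_j\in\Sc$, not $\Sc_c$. In fact the only moment relations reachable by the substitutions available for a general $a\in\A$ are $\sum_j\alpha_j\lambda_j^k\overline{\lambda_j}^m=0$ with $k,m\in\{0,1,2\}$ and $k+m\leq 2$ together with $\sum_j\alpha_j|\lambda_j|^2=0$, and for $n\geq4$ distinct \emph{real} eigenvalues these relations alone do not force $\alpha_j=0$. The argument does become complete under either of two natural extra assumptions, which you should state explicitly: (a) $a\in\Ao$, in which case $a^*x\in\Ao$ and the induction iterates (or, more directly, the paper's own remark $\varphi(b,p(a_0))=p(\lambda)\varphi(b,e)$ at the end of Section \ref{sect2} gives all moments $\sum_j\alpha_j\lambda_j^k=0$ and a clean Vandermonde proof); or (b) $\varphi_j\in\Sc_c$ with $\Ao$ dense in $\A$, so that identities established on $\Ao$ extend to $\A$ and your induction closes as you intended.
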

The proof is not very different from the one in \cite{denittis} for states on C*-algebras and will not be given here. 

Our Definition \ref{def1} implies the following {\em natural} result: 

\begin{lemma}\label{lemma2}
If $a=a^*$, then $\lambda\in\mathbb{R}$ in (\ref{24}). 
\end{lemma}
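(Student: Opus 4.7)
The plan is to combine the third identity of (\ref{21}) with the conjugate-symmetry relation in (\ref{22}), applied at the particular point $b=e$, and conclude that $\lambda$ equals its own complex conjugate.

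First I would specialize the eigenstate relation (\ref{24}) to $b=e$, which, together with the normalization $\varphi(e,e)=1$, gives $\varphi(e,a)=\lambda$. Applying the conjugate-symmetry in (\ref{22}), this immediately yields $\varphi(a,e)=\overline{\lambda}$.

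Next I would exploit the ``covariance'' identity $\varphi(ax,y)=\varphi(x,a^{*}y)$ with the legal choice $x=y=e\in\Ao$. Since $ae=a$ and $a^{*}e=a^{*}$, this reads $\varphi(a,e)=\varphi(e,a^{*})$. Under the hypothesis $a=a^{*}$, the right-hand side is exactly $\varphi(e,a)=\lambda$, while the left-hand side was just computed to be $\overline{\lambda}$. Therefore $\lambda=\overline{\lambda}$, i.e., $\lambda\in\mathbb R$.

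There is essentially no technical obstacle here: the argument is a direct transcription of the classical ``Hermitian operators have real eigenvalues'' proof, with the inner product replaced by $\varphi$ and the vector $\psi$ replaced by the unit element $e$. The only point that needs a moment's care is that the covariance identity in (\ref{21}) is stated for $x,y\in\Ao$ rather than for arbitrary elements of $\A$, but since $e\in\Ao$ the substitution $x=y=e$ is legitimate, and no further continuity or density argument is required.
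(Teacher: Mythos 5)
Your proof is correct and is essentially the paper's own argument: both reduce to the chain $\lambda=\varphi(e,a)=\varphi(a^{*},e)=\varphi(a,e)=\overline{\varphi(e,a)}=\overline{\lambda}$, using (\ref{24}) at $b=e$, the identity $\varphi(ax,y)=\varphi(x,a^{*}y)$ with $x=y=e$, and conjugate symmetry from (\ref{22}). Your remark that $e\in\Ao$ makes the substitution legitimate is the only point of care, and you handle it correctly.
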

\begin{proof}
Indeed we have
$$
\lambda=\varphi(e,a)=\varphi(a^*,e)=\varphi(a,e)=\overline{\varphi(e,a)}=\overline{\lambda},
$$
 so that the reality of $\lambda$ follows.
 
\end{proof}

If $a^*\neq a$, then $\lambda$ is not necessarily real. In this general case it is possible to proceed as follows.

\begin{prop}\label{prop3}
	Let $\varphi\in\Sc$. Then, defining
	\be
	\psi(a,b)=\varphi(b^*,a^*),
	\label{26}\en
	$a,b\in\A$, $\psi$ is also a sesquilinear positive form. Moreover 
	\be
	\psi(xa,y)=\psi(x,ya^*),
	\label{27}\en
	$\forall a\in\A$, $x,y\in\Ao$.	
	Furthermore, if $\varphi$ satisfies (\ref{23}), $\psi$ satisfies also
	\be
	|\psi(a,b)|\leq\gamma_\varphi\|a\|\|b\|,
	\label{28}\en
	$a,b\in\A$.
	
\end{prop}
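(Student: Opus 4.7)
The plan is to verify the four claims of the proposition directly from the definition $\psi(a,b)=\varphi(b^*,a^*)$, leveraging the corresponding properties already established for $\varphi$ in (\ref{21})--(\ref{22}).

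First I would check sesquilinearity and positivity. Linearity in the second slot of $\psi$ comes from antilinearity in the first slot of $\varphi$: writing $(\alpha b_1+\beta b_2)^*=\overline\alpha b_1^*+\overline\beta b_2^*$ and applying the conjugate-linear property of $\varphi$ in its first entry yields the scalars $\alpha$ and $\beta$ back in front. Antilinearity in the first slot of $\psi$ is the mirror computation. Positivity is immediate: $\psi(a,a)=\varphi(a^*,a^*)\ge 0$ for every $a\in\A$.

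The core step is (\ref{27}). By definition $\psi(xa,y)=\varphi(y^*,(xa)^*)=\varphi(y^*,a^*x^*)$. I would then use Hermitian symmetry (first line of (\ref{22})) to rewrite this as $\overline{\varphi(a^*x^*,y^*)}$, and apply the covariance property (third line of (\ref{21})) with the substitution $a\mapsto a^*\in\A$, $x\mapsto x^*\in\Ao$, $y\mapsto y^*\in\Ao$ to obtain $\overline{\varphi(a^*x^*,y^*)}=\overline{\varphi(x^*,ay^*)}$. One final use of Hermitian symmetry brings this to $\varphi(ay^*,x^*)$, which is precisely $\psi(x,ya^*)$ by the definition of $\psi$ applied to $(ya^*)^*=ay^*$. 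The main obstacle is just keeping track of which arguments live in $\Ao$ versus $\A$ so that the property (\ref{21})$_3$ can legitimately be invoked; the involution on $\Ao$ preserves $\Ao$, so all the elements land in the correct spaces.

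Finally, for the bound (\ref{28}), I would apply (\ref{23}) to $\varphi(b^*,a^*)$ and use the fact that the involution on a Banach quasi *-algebra is isometric, so $\|a^*\|=\|a\|$ and $\|b^*\|=\|b\|$; this yields $|\psi(a,b)|\le\gamma_\varphi\|b^*\|\|a^*\|=\gamma_\varphi\|a\|\|b\|$, as required. No step is really difficult here; the delicate point, as noted, is verifying the intertwining identity (\ref{27}) via the two successive passes through Hermitian symmetry.
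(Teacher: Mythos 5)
Your proposal is correct and follows essentially the same route as the paper: the paper also dismisses sesquilinearity and positivity as trivial, proves (\ref{27}) by reducing to (\ref{21})$_3$ via $\psi(xa,y)=\varphi(y^*,a^*x^*)=\varphi(ay^*,x^*)=\psi(x,ya^*)$, and gets (\ref{28}) from (\ref{23}) together with the isometry of the involution. The only cosmetic difference is that the paper applies (\ref{21})$_3$ directly in its reversed form, whereas you conjugate with Hermitian symmetry first and then conjugate back; the two computations are identical in substance.
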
 
\begin{proof}
	The proof that $\psi$ is sesquilinear and positive is trivial and will not be given here. As for (\ref{27}) we have, using (\ref{26}) and (\ref{21})$_3$ 
	$$
		\psi(xa,y)=\varphi(y^*,(xa)^*)=\varphi(y^*,a^*x^*)=\varphi(ay^*,x^*)=\psi(x,(ay^*)^*)=\psi(x,ya^*).
	$$
	Because of (\ref{23}) and (\ref{26}) we also have
	$$
	|\psi(a,b)|=|\varphi(b^*,a^*)|\leq \gamma_\varphi\|b^*\|\|a^*\|,
	$$
	so that (\ref{28}) follows.
	
\end{proof}

We notice that the constant appearing in (\ref{23}) and (\ref{28}) is the same. Notice also that, even if $\varphi\in\Sc$, the form $\psi$ does not belong to $\Sc$, if $(\A,\Ao)$ is not abelian, since (\ref{21})$_3$ is replaced here by (\ref{27}). In this case we could introduce a set $\tilde\Sc$, with (\ref{27}) rather than (\ref{21})$_3$ and the other two in (\ref{21}) unchanged. Then $\psi\in\tilde\Sc$ if and only if $\varphi\in\Sc$.

\begin{prop}\label{prop4}
	Let $\varphi\in\Sc$, $a\in\A$, $\lambda\in\mathbb{C}$ and $\psi$ the form in (\ref{26}). The following statements are equivalent:
	\be
	\begin{array}{ll}
		(i) \qquad \varphi(b,a)=\lambda\,\varphi(b,e),\\
		(ii) \qquad \psi(b,a^*)=\overline{\lambda}\,\psi(b,e),\\
		(iii) \qquad \varphi(a,b)=\overline{\lambda}\,\varphi(e,b),\\
		(iv) \qquad \psi(a^*,b)={\lambda}\,\psi(e,b),
	\end{array}
	\label{29}\en
	$\forall b\in\A$.
	Moreover, if any of the above equalities is satisfied, then
	\be
	\varphi(a,a)=\psi(a,a)=|\lambda|^2.
	\label{210}\en
\end{prop}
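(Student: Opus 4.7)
The proof divides naturally into the chain of equivalences (i)--(iv) and then the norm identity (\ref{210}).

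For the equivalences, I would unpack the definition (\ref{26}), $\psi(x,y)=\varphi(y^*,x^*)$, to rewrite (ii) and (iv) as statements about $\varphi$. Specifically, (iv) becomes $\varphi(b^*,a)=\lambda\,\varphi(b^*,e)$ for every $b\in\A$; since the involution is a bijection on $\A$, letting $c=b^*$ range over $\A$ gives back (i). The same manipulation reduces (ii) to (iii). Finally, (i) $\Leftrightarrow$ (iii) is immediate after taking complex conjugates of (i) and invoking the Hermitian symmetry $\varphi(a,b)=\overline{\varphi(b,a)}$ from (\ref{22}).

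For $\varphi(a,a)=|\lambda|^2$, assume (i). By Proposition \ref{prop1} one has $\varphi(a-\lambda e,a-\lambda e)=0$. Expanding, using antilinearity in the first slot and linearity in the second, yields
$$
\varphi(a,a)-\lambda\,\varphi(a,e)-\overline{\lambda}\,\varphi(e,a)+|\lambda|^2\varphi(e,e)=0.
$$
Evaluating (i) at $b=e$ gives $\varphi(e,a)=\lambda$, and (iii) at $b=e$ gives $\varphi(a,e)=\overline{\lambda}$; together with the normalization $\varphi(e,e)=1$, the cross terms telescope and we conclude $\varphi(a,a)=|\lambda|^2$.

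For $\psi(a,a)=|\lambda|^2$ the plan is to run the same argument with $\psi$ replacing $\varphi$. Since $\psi$ is positive, sesquilinear, and Hermitian-symmetric, it satisfies the same Cauchy--Schwarz inequality, and hence the zero-defect characterization underlying Proposition \ref{prop1} is still at our disposal. The boundary data come from (ii) and (iv) evaluated at $b=e$, which yield $\psi(e,a^*)=\overline{\lambda}$ and $\psi(a^*,e)=\lambda$, together with the auxiliary identities $\psi(a,e)=\psi(e,a^*)$ and $\psi(e,a)=\psi(a^*,e)$ obtained by applying (\ref{27}) with $x=y=e$. The main obstacle I see is precisely this last step: Proposition \ref{prop1} was formulated for $\varphi\in\Sc$, whereas $\psi$ generically lives in the modified class $\tilde\Sc$ where (\ref{21})$_3$ is replaced by (\ref{27}), so one must first check that the proof of Proposition \ref{prop1} adapts verbatim under the weaker compatibility axiom and identify the correct eigenstate-type relation on $\psi$ (the translation of (ii) or (iv) into the form of Definition \ref{def1}) to feed into it; once this bookkeeping is done, the expansion of the appropriate defect $\psi(\,\cdot-\mu e,\,\cdot-\mu e\,)$ closes the computation.
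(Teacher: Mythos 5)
Your handling of the four equivalences is correct and amounts to the same computation as the paper's: the paper writes out only $(i)\Rightarrow(ii)$ as a chain of identities through (\ref{26}) and the symmetry $\varphi(a,b)=\overline{\varphi(b,a)}$ of (\ref{22}), declaring the rest ``similar''; your reduction of $(ii)$ and $(iv)$ to statements about $\varphi$, plus conjugation for $(i)\Leftrightarrow(iii)$, is the same argument organized slightly differently. Your proof of $\varphi(a,a)=|\lambda|^2$ is also correct, though more roundabout than the paper's, which simply writes $\varphi(a,a)=\lambda\varphi(a,e)$ by $(i)$ and $\varphi(a,e)=\overline{\lambda}\varphi(e,e)=\overline{\lambda}$ by $(iii)$ at $b=e$.

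The last step is where there is a genuine gap, and you have located the obstacle in the wrong place. The passage from $\Sc$ to $\tilde\Sc$ is harmless: Cauchy--Schwarz, and hence the zero-defect characterization of Proposition \ref{prop1}, uses only positivity and sesquilinearity, not (\ref{21})$_3$. What fails is the identification of which defect vanishes. The eigenstate-type relation inherited by $\psi$ is $(ii)$, namely $\psi(b,a^*)=\overline{\lambda}\,\psi(b,e)$ for all $b$, so the quantity your argument forces to zero is $\psi(a^*-\overline{\lambda}e,\,a^*-\overline{\lambda}e)$, and expanding it gives $\psi(a^*,a^*)=|\lambda|^2$ --- which by (\ref{26}) equals $\varphi(a,a)$, i.e.\ the identity you already proved. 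The quantity asserted in (\ref{210}) is $\psi(a,a)=\varphi(a^*,a^*)$, and nothing in the hypotheses controls $\varphi$ at $a^*$: the defect you would need, $\psi(a-\lambda e,a-\lambda e)=\varphi(a^*-\overline{\lambda}e,a^*-\overline{\lambda}e)$, vanishes only if $\varphi$ is also an eigenstate of $a^*$, and it need not be. For instance, take $\A=\Ao=M_2(\mathbb{C})$, $\varphi(x,y)=\langle x\xi,y\xi\rangle$ (linear in the second entry) with $\xi$ the first canonical basis vector and $a=\left(\begin{array}{cc}0&1\\0&0\end{array}\right)$; then $a\xi=0$, so $(i)$ holds with $\lambda=0$, yet $\psi(a,a)=\|a^*\xi\|^2=1$. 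So $\psi(a,a)=|\lambda|^2$ cannot be proved as stated; the ``similar'' computation the paper alludes to actually yields $\psi(a^*,a^*)=|\lambda|^2$, and the printed $\psi(a,a)$ in (\ref{210}) should be read that way (or one must assume in addition that $\varphi$ is an eigenstate of $a^*$, e.g.\ because $a=a^*$).
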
 
\begin{proof}
	We only show here that $(i)$ implies $(ii)$. The proofs of all these claims are similar, and will not be repeated. Let us assume that $(i)$ is satisfied: $\varphi$ is an eigenstate of $a$ with eigenvalue $\lambda$. Then we have
	$$
	\psi(b,a^*)=\varphi(a,b^*)=\overline{\varphi(b^*,a)}=\overline{\lambda\,\varphi(b^*,e)}=\overline{\lambda}\varphi(e,b^*)=\overline{\lambda}\psi(b,e),
	$$
	because of $(i)$ and (\ref{26}). The others implications can be proved in a similar way. 
	
	As for (\ref{210}), we have for instance $\varphi(a,a)=\lambda\varphi(a,e)$, using $(i)$. But, using $(iii)$ with $b=e$ we also have $\varphi(a,e)=\overline{\lambda}\varphi(e,e)=\overline{\lambda}$, so that $\varphi(a,a)=|\lambda|^2$. Similarly we can prove that $\psi(a,a)=|\lambda|^2$.

\end{proof}

This Proposition shows, in particular, that, if $\varphi$ is an eigenstate of $a$ with eigenvalue $\lambda$, then $\psi$ is an eigenstate of $a^*$ with eigenvalue $\overline\lambda$. Incidentally, formulas $(iii)$ and $(iv)$ suggest to introduce the notion of {\em left} and {\em right} eigenvectors, depending on the position of $a$ inside the sesquilinear forms $\varphi$ or $\psi$. For instance, from $(i)$ we could say that  $\varphi$ is a right eigenstate of $a$ with eigenvalue $\lambda$, while from $(iii)$ $\varphi$ can be called a left eigenstate of $a$ with eigenvalue $\overline\lambda$. Similarly, see $(ii)$ and $(iv)$, $\psi$  is a right eigenstate of $a^*$ with eigenvalue $\overline\lambda$ and a left eigenstate of $a^*$ with eigenvalue $\lambda$. 

It is possible to introduce a norm on our sesquilinear forms in $\Sc_{c}$. Indeed we can put
\be
\|\varphi\|=\sup_{\|a\|=1}\varphi(a,a)=\sup_{\|a\|=\|b\|=1}|\varphi(a,b)|,
\label{211}\en
where the equivalence of the two expressions follow from (\ref{22})$_2$. It is clear that, $\forall\varphi\in\Sc_{c_0}$, $\|\varphi\|=1$. In fact we have, recalling that our quasi *-algebra has an identity $e$, $\|\varphi\|=\sup_{\|a\|=1}\varphi(a,a)\leq \sup_{\|a\|=1}\|a\|^2=1$, and $\|\varphi\|=\sup_{\|a\|=1}\varphi(a,a)\geq \varphi(e,e)=1$.

 We conclude this section by noticing that the results of Proposition \ref{prop4} can be enriched if $\varphi$ is an eigenstate of an element $a_0$ in $\Ao$, $a_0\in\Ao$.  Hence we have, for instance, $\varphi(b,a_0)=\lambda\,\varphi(b,e)$, $\forall b\in\A$. Then we also have $\varphi(b,a_0^n)=\lambda^n\,\varphi(b,e)$, $\forall b\in\A$, $n\geq0$. In fact, first of all we notice that, since $a_0\in\Ao$, its powers are all well defined in $\Ao$. Then we have, in particular
 $$
 \varphi(b,a_0^2)=\varphi(a_0^*b,a_0)=\lambda\varphi(a_0^*b,e)=\lambda\varphi(b,a_0)=\lambda^2\varphi(b,e),
 $$ 
 $\forall b\in\A$. And so on. Of course, a similar result can be extended for all polynomials: let $p(a_0)$ be a polynomial in $a_0$, then $\varphi(b,p(a_0))=p(\lambda)\varphi(b,e)$. Similar results can be deduced out of the other equalities in (\ref{29}).

\section{Ladder elements}\label{sect3}

In this section we will discuss in some details an application of what we have done in Section \ref{sect2}, and in particular we will use ladder operators of a specific kind to construct a family of sesquilinear forms which are eigenvectors of a single element of $\Ao$, with different eigenvalues. More explicitly, we will use elements of $\Ao$ satisfying a specific commutation rule which is motivated by the so-called {\em quons}, which are operators acting on some Hilbert space and which depend on a real parameter $q\in[-1,1]$, see \cite{moh,fiv,gree}. Few facts on these operators are listed in the Appendix, where it is also discussed that they are bounded operators, if $q\in[-1,1[$.

With this in mind we consider here two elements $x_0, y_0\in\Ao$ satisfying the following {\em $q$-mutator}:
\be
[x_0,y_0]_q=x_0y_0-qy_0x_0=e.
\label{31}\en
We call $n_0=y_0x_0$, which is still an element of $\Ao$. It is clear that, if $x_0=y_0^*$, then $n_0=n_0^*$. We assume that an element $\varphi_0\in\Sc_c$ exists such that
\be
\varphi_0(b,x_0)=0,
\label{32}\en
$\forall b\in\A$. This means that $\varphi_0$ is an eigenstate of $x_0$ with eigenvalue $\lambda_0=0$. Notice that (\ref{32}) implies that $\varphi_0(b,c_0x_0)=0$ for all $b\in\A$ and $c_0\in\Ao$. Indeed we have $\varphi_0(b,c_0x_0)=\varphi_0(c_0^*b,x_0)=0$ since $c_0^*b\in\A$. Further, the continuity of {$\varphi_0$} implies that  $\varphi_0(b,cx_0)=0$ for all $b,c\in\A$. This is because it surely exists a sequence $\{c_n\in\Ao\}$ such that $\|c-c_n\|\rightarrow0$. Then we also have {$\|cx_0-c_nx_0\|\rightarrow0$,} so that $\varphi_0(b,cx_0)=\lim_n\varphi_0(b,c_nx_0)=0$. Summarizing:
\be
\varphi_0(b,cx_0)=0,
\label{32bis}\en
$\forall b,c\in\A$.

\vspace{2mm}

{\bf Remark:--} The existence of the sesquilinear form $\varphi_0$ satisfying (\ref{32}) is not a trivial requirement. Indeed it is not so different from what is done in the literature for bosons, or for pseudo-bosons: in both cases, one asks for the existence of a {\em vacuum state}. But, while this state exists, and belongs to the Hilbert space (e.g., to $\ltwo$) in the bosonic case, this existence is not guaranteed for pseudo-bosons\footnote{In what we are doing here, the bosonic case corresponds to the case $y_0=x_0^*$, while the pseudo-bosonic (or pseudo-quonic, to be more precise) to $y_0\neq x_0^*$.}. In this case, even when the vacuum exists, it could be an element not in $\ltwo$. We refer to \cite{bagspringer} where the role of the distributions is discussed in details.

\vspace{2mm}

The following result holds true:
\begin{prop}\label{prop5}
	Calling
	\be
	\varphi_l(a,b)=\varphi_{l-1}(ay_0,by_0),
	\label{33}\en
	$a,b\in\A$, $l=1,2,3,\ldots$, then 
	\be\varphi_l(a,n_0)=\beta_l\varphi_l(b,e),
	\label{34}\en
	$\forall b\in\A$, where
	\be
	\beta_l =\left\{
	\begin{array}{ll}
		0, \hspace{2.6cm} \mbox{if } l=0,\\
		1+q\beta_{l-1}, \hspace{1.2cm} \mbox{if } l\geq1.
	\end{array}
	\right.
	\label{35}
	\en	
\end{prop}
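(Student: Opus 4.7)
The plan is to prove the statement by induction on $l$, using the $q$-mutator identity (\ref{31}) to reduce $n_0 y_0$ to $y_0 + q\,y_0 n_0$, and then exploiting the sesquilinearity axioms together with the Hermitian symmetry $\varphi(a,b)=\overline{\varphi(b,a)}$ to bring the inner form into a shape where the induction hypothesis applies.

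For the base case $l=0$, I would observe that $n_0=y_0 x_0$, so $\varphi_0(b,n_0)=\varphi_0(b,y_0 x_0)$, and invoke (\ref{32bis}) with $c=y_0\in\Ao\subset\A$ to conclude $\varphi_0(b,n_0)=0=\beta_0\,\varphi_0(b,e)$. This is essentially a sanity check that (\ref{32}) and the continuity extension are strong enough to annihilate any second argument of the form $(\text{anything})\cdot x_0$.

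For the inductive step, assume $\varphi_{l-1}(c,n_0)=\beta_{l-1}\varphi_{l-1}(c,e)$ for every $c\in\A$. From (\ref{33}) and the identity $n_0 y_0=y_0 x_0 y_0=y_0(e+q n_0)=y_0+q\,y_0 n_0$, I get
\[
\varphi_l(b,n_0)=\varphi_{l-1}(by_0,y_0)+q\,\varphi_{l-1}(by_0,y_0 n_0).
\]
The first summand is, by definition, $\varphi_l(b,e)$. For the second summand, I would perform the manipulation
\[
\varphi_{l-1}(by_0,y_0 n_0)
= \varphi_{l-1}(y_0,b^{*}y_0 n_0)
= \overline{\varphi_{l-1}(b^{*}y_0 n_0,y_0)}
= \overline{\varphi_{l-1}(n_0,y_0^{*}b y_0)}
= \varphi_{l-1}(y_0^{*}b y_0,n_0),
\]
applying (\ref{21})$_3$ twice (once with $a=b$, once with $a=b^{*}y_0$) and the Hermitian symmetry (\ref{22})$_1$. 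Then the induction hypothesis with $c=y_0^{*}b y_0\in\A$ yields $\beta_{l-1}\varphi_{l-1}(y_0^{*}b y_0,e)$, and one final application of (\ref{21})$_3$ (with $a=y_0^{*}b$, $x=y_0$, $y=e$) reduces this to $\beta_{l-1}\varphi_{l-1}(y_0,b^{*}y_0)=\beta_{l-1}\varphi_{l-1}(by_0,y_0)=\beta_{l-1}\varphi_l(b,e)$. Adding the two contributions gives $(1+q\beta_{l-1})\varphi_l(b,e)=\beta_l\varphi_l(b,e)$.

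The delicate step is the chain of manipulations that rewrites $\varphi_{l-1}(by_0,y_0 n_0)$ as $\varphi_{l-1}(y_0^{*}b y_0,n_0)$: the axiom $\varphi(ax,y)=\varphi(x,a^{*}y)$ only moves an element across the form when the \emph{remaining} entry lies in $\Ao$, so I have to alternate with the Hermitian symmetry to swap which side is $\Ao$-bound. Once this bookkeeping is arranged, the recursion on $\beta_l$ in (\ref{35}) appears mechanically from the decomposition of $n_0 y_0$ via the $q$-mutator.
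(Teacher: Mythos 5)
Your proof is correct and follows essentially the same route as the paper: induction on $l$, the decomposition $n_0y_0=y_0+q\,y_0n_0$, and the reduction of $\varphi_{l-1}(by_0,y_0n_0)$ to $\varphi_{l-1}(y_0^*by_0,n_0)$ so that the induction hypothesis applies to $c=y_0^*by_0\in\A$. Your detour through the Hermitian symmetry (\ref{22})$_1$ is just a more careful bookkeeping of the step the paper performs by applying (\ref{21})$_3$ directly; the argument is the same.
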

\begin{proof}	
	We prove our claim by induction: the statement is true for $l=0$ because of (\ref{32}). Let us now suppose that $\varphi_{l-1}(a,n_0)=\beta_{l-1}\varphi_{l-1}(b,e)$ for a fixed $l$. We want to deduce then that $\varphi_{l}(a,n_0)=\beta_{l}\varphi_{l}(b,e)$. Indeed we have, since $n_0y_0=y_0x_0y_0=y_0(e+qy_0x_0)=y_0+qy_0^2x_0$,
	$$
	\varphi_{l}(a,n_0)=\varphi_{l-1}(ay_0,n_0y_0)=\varphi_{l-1}(ay_0,y_0)+q\varphi_{l-1}(ay_0,y_0^2x_0)=\varphi_{l}(a,y)+q\varphi_{l-1}(ay_0,y_0^2x_0).
	$$
	Now, because of the induction hypothesis and of (\ref{33}), we have $$\varphi_{l-1}(ay_0,y_0^2x_0)=\varphi_{l-1}(y_0^*ay_0,n_0)=\beta_{l-1}\varphi_{l-1}(y_0^*ay_0,e)=\beta_{l-1}\varphi_{l-1}(ay_0,y_0)=\beta_{l-1}\varphi_{l}(a,e),$$
	which implies that 
	$$
		\varphi_{l}(a,n_0)=(1+q\beta_{l-1})\varphi_{l}(a,e),
	$$
	as we had to prove.

\end{proof}

Incidentally we observe that, if $-1<q<1$, then $\beta_l=1+q+q^2+\cdots+q^l=\frac{1-q^{l+1}}{1-q}$, $\forall l\geq1$.

What this proposition shows is that, for elements of $\Ao$ satisfying (\ref{31}), if we know the vacuum of the operator $x_0$, we can construct a full sequence of forms as in (\ref{33}), which are all eigenvectors of a single element $n_0$, see (\ref{34}). In particular, since $y_0$ allows us to move from $\varphi_{l-1}$ to $\varphi_l$, we will call $y_0$ a {\em raising operator}. Similarly, $x_0$ is a {\em lowering operator}, as Proposition \ref{prop7} below shows.

\begin{prop}\label{prop7}
For all $a,b\in\A$ we have
	\be
	\varphi_l(ax_0,bx_0)=\beta_l^2\varphi_{l-1}(a,b),
	\label{36}\en	
	 $\forall l\geq1$.
\end{prop}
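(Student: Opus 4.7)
The plan is to follow the inductive style of Proposition \ref{prop5}, after first establishing a small generalisation that lets $n_0$ sit inside an arbitrary ``test'' factor on the right-hand slot.

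\textbf{Step 1 (auxiliary identity).} I would prove, by induction on $l \geq 0$, that
$$
\varphi_l(a, b n_0) = \beta_l \, \varphi_l(a, b), \qquad \forall\, a,b \in \A.
$$
For $l = 0$, writing $b n_0 = (b y_0) x_0$ and invoking (\ref{32bis}) gives the value $0 = \beta_0 \varphi_0(a,b)$. For the inductive step I would use (\ref{33}) together with $n_0 y_0 = y_0(e + q n_0)$ (a direct consequence of the $q$-mutator (\ref{31})) to obtain
$$
\varphi_l(a, b n_0) = \varphi_{l-1}(a y_0, b y_0) + q\, \varphi_{l-1}\bigl(a y_0, (b y_0) n_0\bigr),
$$
and then close the induction via the hypothesis (applied with ``$b$'' replaced by $b y_0 \in \A$) and the recursion $\beta_l = 1 + q \beta_{l-1}$.

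\textbf{Step 2 (Proposition \ref{prop7}).} I would apply (\ref{33}) and the $q$-mutator $x_0 y_0 = e + q n_0$ to write
$$
\varphi_l(a x_0, b x_0) = \varphi_{l-1}\bigl(a(e + q n_0), \, b(e + q n_0)\bigr),
$$
and then expand by sesquilinearity (using that $q$ is real, so $\overline{q} = q$). Of the four resulting terms, $\varphi_{l-1}(a, b n_0)$ and $\varphi_{l-1}(a n_0, b n_0)$ are absorbed directly by Step 1 (applied at index $l-1$), while the remaining cross term $\varphi_{l-1}(a n_0, b)$ is handled by first conjugating via (\ref{22}) and then applying Step 1 once more (note that $\beta_{l-1}$ is real). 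Collecting coefficients yields $(1 + q \beta_{l-1})^2 \, \varphi_{l-1}(a, b)$, which equals $\beta_l^2 \, \varphi_{l-1}(a, b)$ by the recursion (\ref{35}).

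The main obstacle is really Step 1: Proposition \ref{prop5} only treats the case $b = e$, and the algebraic identity (\ref{21}) does not directly allow us to move $n_0 \in \Ao$ off of the right-hand entry when the left-hand entry is a generic element of $\A$ rather than of $\Ao$. What saves us is that $n_0 = y_0 x_0$ carries a trailing $x_0$, so every application of $x_0 y_0 = e + q n_0$ inside (\ref{33}) ``spends'' one unit of the recursion and produces a fresh copy of $n_0$ that the induction hypothesis can absorb. Once Step 1 is in place, the proof of Proposition \ref{prop7} is essentially a binomial-square computation.
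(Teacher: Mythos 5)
Your proof is correct, but it follows a genuinely different route from the paper's. The paper first assumes $a\in\Ao$, which makes $b^*a$ a legitimate product and allows it to pull $n_0$ out of the right-hand slot via (\ref{21})$_3$ and then invoke Proposition \ref{prop5} directly ($\varphi_{l-1}(a,bn_0)=\varphi_{l-1}(b^*a,n_0)=\beta_{l-1}\varphi_{l-1}(b^*a,e)=\beta_{l-1}\varphi_{l-1}(a,b)$, and similarly for the other cross terms); it then extends to arbitrary $a\in\A$ by density of $\Ao$ and the continuity (\ref{23}) of each $\varphi_k$. You instead prove the strengthened identity $\varphi_l(a,bn_0)=\beta_l\varphi_l(a,b)$ for \emph{all} $a,b\in\A$ by a fresh induction on $l$, exploiting the same relation $n_0y_0=y_0(e+qn_0)$ that drives Proposition \ref{prop5}; this is a true generalisation of (\ref{34}) (recovered at $b=e$) and it makes the final computation purely algebraic, with no approximation step in the main argument — continuity enters only through (\ref{32bis}) in your base case, exactly where the paper also needs it. The trade-off is that your route requires an extra lemma and its own induction, whereas the paper's is shorter given that Proposition \ref{prop5} is already in hand, at the price of a closing limit argument. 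Both proofs land on the same binomial-square identity $(1+q\beta_{l-1})^2=\beta_l^2$, and your handling of the cross term $\varphi_{l-1}(an_0,b)$ by Hermitian conjugation via (\ref{22}) (legitimate since $q$ and $\beta_{l-1}$ are real) matches what the paper does implicitly.
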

\begin{proof}
We start recalling that $x_0y_0=e+qy_0x_0=e+qn_0$.

Now, let us suppose first that $a\in\Ao$. Hence we have, because of (\ref{33}),
$$
\varphi_l(ax_0,bx_0)=\varphi_{l-1}(ax_0y_0,bx_0y_0)=\varphi_{l-1}(a(e+qn_0),b(e+qn_0))=
$$
$$
=\varphi_{l-1}(a,b)+q\varphi_{l-1}(a,bn_0)+q\varphi_{l-1}(an_0,b)+q^2\varphi_{l-1}(an_0,bn_0).
$$
Now we have
$$
\varphi_{l-1}(a,bn_0)=\varphi_{l-1}(b^*a,n_0)=\beta_{l-1}\varphi_{l-1}(b^*a,e)=\beta_{l-1}\varphi_{l-1}(a,b),
$$
because of (\ref{34}) and recalling that, under our assumption on $a$, $b^*a\in\A$ is well defined. Similarly we deduce that $\varphi_{l-1}(an_0,b)=\beta_{l-1}\varphi_{l-1}(a,b)$ and $\varphi_{l-1}(an_0,bn_0)=\beta_{l-1}^2\varphi_{l-1}(a,b)$. Hence
$$
\varphi_l(ax_0,bx_0)=\left(1+q\beta_{l-1}\right)^2\varphi_{l-1}(a,b),
$$
which is exactly the (\ref{36}). 

If we now take $a\in\A$, then there exists a sequence $\{a_n\in\Ao\}$ such that $\|a-a_n\|\rightarrow0$ which, in turns, implies that $\|ax_0-a_nx_0\|\rightarrow0$. Then, using the continuity of each $\varphi_k$,
$$
\varphi_{l}(ax_0,bx_0)=\lim_n\varphi_{l}(a_nx_0,bx_0)=\beta_l^2\lim_n \varphi_{l-1}(a_n,b)=\beta_l^2\varphi_{l-1}(a,b),
$$
$a,b\in\A$.

\end{proof}

It is interesting to stress that Proposition \ref{prop2} and Proposition \ref{prop5}, together, imply that the various $\varphi_l$ are linearly independent.

\subsection{Towards coherent forms}

In the literature ladder operators acting on Hilbert spaces are usually associated to particular vectors depending on a variable $z\in\mathbb{C}$ which are called coherent states, and which, among other properties, are eigenstates of the relevant lowering operator one is considering. Many mathematical results and several physical applications exist for coherent states, and for their possible generalizations. We refer to \cite{klauder}-\cite{bagspringer}
for a first reading.  

We will now show that ladder elements of the kind considered in this section can also be used to define what we will call {\em almost coherent forms}, i.e. forms which are close to be eigenstates of the lowering operator $x_0$ in (\ref{31}). Our procedure reflects the one usually adopted in Hilbert spaces. We define (formally, for the moment), the following form:
\be
\Phi_z(a,b)=\sum_{l=0}^\infty\,\frac{z^l}{(\beta_l!)^2}\,\varphi_l(a,b),
\label{37}\en
$\forall a,b\in\A$ and for $z\in\D\subseteq\mathbb{C}$, a subset of $\mathbb{C}$ to be defined. In other words, $\D$ is the set of all the complex $z$'s for which the series in (\ref{37}) does converge for all $a$ and $b$ in $\A$. Notice that, in principle, $\D$ could consist of the single point $z=0$. However, we will show that this is not the case. In (\ref{37}) we have $\beta_0!=1$ and $\beta_l!=\beta_1\beta_2\cdots\beta_l$, $l\geq1$. 

To study the convergence of $\Phi_z(a,b)$ we start stressing that, since $\varphi_0\in\Sc_c$, each $\varphi_l$ belongs to $\Sc_c$ as well, and in particular
\be
|\varphi_l(a,b)|\leq \gamma_{\varphi_0}\|y_0\|_0^{2l}\|a\|\|b\|,
\label{38}\en
$a,b\in\A$, $l=0,1,2,3,\ldots$. This is exactly of the kind (\ref{23}), with $\gamma_\varphi=\gamma_{\varphi_0}\|y_0\|_0^{2l}$. 

Then, going back to (\ref{37}) we  have
$$
|\Phi_z(a,b)|\leq \sum_{l=0}^\infty\,\frac{|z|^l}{(\beta_l!)^2}\,|\varphi_l(c,b)|\leq \gamma_{\varphi_0}\|a\|\|b\|\sum_{l=0}^\infty\,\frac{(|z|\|y_0\|_0^{2})^l}{(\beta_l!)^2},
$$
which is clearly a power series with radius of convergence 
\be\rho'=\lim_{l}\beta_{l+1}^2=\left\{
\begin{array}{ll}
	\infty, \hspace{2.6cm} \mbox{if } q=1,\\
	\frac{1}{1-q}, \hspace{2.4cm} \mbox{if } q\in]-1,1[.
\end{array}
\right.
\label{39}\en
Hence the series (\ref{37}) exists for $|z|<\rho=\frac{\rho'}{\|y_0\|_0^{2}}$, that is for all $z\in C_\rho(0)$, the circle centered in the origin of the complex plane and of radius $\rho$, which is exactly the set $\D$ above we were trying to identify. It is now a simple exercise to check that 
\be
\Phi_z(ax_0,bx_0)=z\Phi_z(a,b),
\label{310}\en
$\forall z\in C_\rho(0)$ and for all $a,b\in\A$. Of course, this is not as saying that $\Phi_z(a,b)$ is an eigenstate of $x_0$ with eigenvalue $z$, but it is not so far away, and in any case it is an interesting feature of the form $\Phi_z$. We hope to be able to define proper coherent forms soon, and to study their properties, including the possibility of deducing some sort of resolution of the identity, as one usually expects from ordinary coherent states, \cite{gazeaubook}. 


\subsection{Back to the $\varphi_l$'s}

In \cite{denittis} the authors prove, in  a different context with respect to the one we are considering here, that the states they call egenvectors  become, after constructing their GNS\footnote{GNS stands for e Gelfand-Naimark-Segal} representations, {\em ordinary}\footnote{i.e., in a purely Hilbertian sense.} eigenvectors of the operator which represents the original element of the C*-algebra in the Hilbert space arising from the GNS construction. The same result can be proven in our framework, and this will have interesting consequences, as we will see. To start our analysis, it is worth to  recall briefly how the GNS construction works in the present case.

{Given a {Banach} quasi $\ast$-algebra $(\A [\| \cdot \|], \A_0[\| \cdot \|_0])$ with identity $e\in\A_0$ and $\varphi\in \Sc$,
we put  
{$$ \mathfrak{N}_\varphi =\{ a\in\mathcal{A}: \; \varphi(a,a)=0\}=\{ a\in\mathcal{A}: \; \varphi(a,b)=0, \; \forall b\in \A\}.$$

	Let 	{$\D_\varphi^0:=\A_0/ \N_\vp=\{ \lambda_\varphi(x):= x+\mathfrak{N}_\varphi: \; x\in\A_0\}$} and { $\D_\varphi:=\A/ \N_\vp=\{ \lambda_\varphi(a):= a+\mathfrak{N}_\varphi: \; a\in\A\}$ }  be the usual quotient sets.

	For any $a\in\A$ we put
$$ \pi_\varphi(a)\lambda_\varphi(x)=\lambda_\varphi(ax), \; x\in\A_0.
$$ Then by the third condition in (2.1) $\pi_\varphi(a)$ is an element of { $\mathcal{L}^\dagger(\D_\varphi^0, \D_\varphi)$} (the set of all linear mappings $X$ { from $\D_\varphi^0$ into $ \D  _\varphi$)},  and $\pi_\varphi(a)^\dagger := \pi_\varphi(a)^\ast\lceil_{\D_\varphi}=\pi_\varphi(a^\ast)$ and $\pi_\varphi(ax)=\pi_\varphi(a)\pi_\varphi(x)$ for all $a\in\A$ and $x\in\A_0$.
 
Moreover if $\xi_\varphi:=\lambda_\varphi(e)$ we define in $\pi_\varphi(\A)\xi_\varphi$ a inner product by 

\be
<\pi_\varphi(a)\xi_\varphi,\pi_\varphi(b)\xi_\varphi>=\varphi(a,b).
\label{311}\en
 We denote by  $\Hil_\varphi$ the completion of $\pi_\varphi(\A)\xi_\varphi.$ 
Then, $\xi_\varphi$ is a cyclic vector for $\pi_\varphi$, that is, $\pi_\varphi(\A)\xi_\varphi$ is dense in $\Hil_\varphi$.   We call $(\Hil_\varphi,\xi_\varphi,\pi_\varphi)$ the GNS-representation of the Banach quasi $\ast$-algebra $(\A[\|\cdot\|],\A_0 [\|\cdot\|_0])$ for $\varphi$.

With this in mind, it is possible to prove the following.
\begin{prop}\label{prop8}
	Let $\varphi\in\Sc_{c_0}$ be an eigenstate of $a\in\A$ with eigenvalue $\lambda$. Then $\xi_\varphi$ satisfies $\pi_\varphi(a)\xi_\varphi=\lambda\xi_\varphi$, and vice-versa. 
\end{prop}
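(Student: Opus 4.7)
The plan is to reduce the statement to Proposition \ref{prop1} via the defining inner product on $\Hil_\varphi$.

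First I would unpack what $\pi_\varphi(a)\xi_\varphi$ really is. Since $e\in\A_0$ and $\pi_\varphi(a)\lambda_\varphi(x)=\lambda_\varphi(ax)$ for $x\in\A_0$, taking $x=e$ gives $\pi_\varphi(a)\xi_\varphi=\lambda_\varphi(a)$. Hence the identity to prove,
\[
\pi_\varphi(a)\xi_\varphi=\lambda\,\xi_\varphi,
\]
is equivalent to $\lambda_\varphi(a-\lambda e)=0$, i.e.\ to the membership $a-\lambda e\in\N_\varphi$.

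Next I would use the inner product (\ref{311}), which extends by density to give $\langle \lambda_\varphi(c),\lambda_\varphi(d)\rangle=\varphi(c,d)$ for $c,d\in\A$ (it is enough to observe that both sides are continuous in $c,d$, thanks to $\varphi\in\Sc_{c_0}$, and that they coincide on the dense subset $\pi_\varphi(\A)\xi_\varphi$). Consequently,
\[
\|\pi_\varphi(a)\xi_\varphi-\lambda\xi_\varphi\|_{\Hil_\varphi}^{2}
=\|\lambda_\varphi(a-\lambda e)\|_{\Hil_\varphi}^{2}
=\varphi(a-\lambda e,a-\lambda e).
\]
So $\pi_\varphi(a)\xi_\varphi=\lambda\xi_\varphi$ in $\Hil_\varphi$ if and only if $\varphi(a-\lambda e,a-\lambda e)=0$.

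Finally I would invoke Proposition \ref{prop1}, which states that the latter condition is equivalent to $\varphi$ being an eigenstate of $a$ with eigenvalue $\lambda$. This proves both implications simultaneously. The only delicate point is the extension of the inner product from $\pi_\varphi(\A)\xi_\varphi$ (the dense subspace on which (\ref{311}) is \emph{defined}) to general elements $\lambda_\varphi(a)$ with $a\in\A$; this is where the hypothesis $\varphi\in\Sc_{c_0}$ enters, ensuring that $a\mapsto\lambda_\varphi(a)$ is continuous and that the formula $\langle\lambda_\varphi(a),\lambda_\varphi(b)\rangle=\varphi(a,b)$ passes to the completion. Apart from this small density argument, the proof is essentially a restatement of Proposition \ref{prop1} in the Hilbert-space language furnished by the GNS construction.
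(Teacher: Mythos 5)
Your argument is correct and follows essentially the same route as the paper: both reduce the claim to Proposition \ref{prop1} and then translate $\varphi(a-\lambda e,a-\lambda e)=0$ into $\|\pi_\varphi(a-\lambda e)\xi_\varphi\|^2=0$ via (\ref{311}). The density/extension step you flag as delicate is actually unnecessary here, since the paper defines the inner product (\ref{311}) on all of $\pi_\varphi(\A)\xi_\varphi$, i.e.\ for arbitrary $a,b\in\A$, not merely on $\pi_\varphi(\A_0)\xi_\varphi$.
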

\begin{proof}
Indeed we have, using Proposition \ref{prop1}, that $\varphi(b,a)=\lambda\varphi(b,e)$ for all $b\in\A$, if and only if $\varphi(a-\lambda e,a-\lambda e)=0$. But this, using (\ref{311}), is equivalent to
$$
0=\langle\pi_\varphi(a-\lambda e)\xi_\varphi,\pi_\varphi(a-\lambda e)\xi_\varphi\rangle=\|\pi_\varphi(a-\lambda e)\xi_\varphi\|^2,
$$
	so that our claim easily follows.

\end{proof}

With this in mind we consider now what happens for our family of forms $\varphi_l$ in (\ref{32}) and (\ref{34}). In particular we will show that all these forms produce a single *-representation (and therefore a single Hilbert space $\Hil$ and a single cyclic vector) and, even more interesting (and expected!) that it is natural to associate the various $\varphi_l$ to different vectors of $\Hil$ which, under natural hypotheses on $x_0$ and $y_0$, are orthogonal.

Let us consider first the form $\varphi_0$ in (\ref{32}), and let us assume, to simplify the notation that $\varphi_0\in\Sc_{c_0}$. This is not a big requirement: if $\gamma_{\varphi_0}\neq1$, we could replace $\varphi_0$ with $\varphi_0'=\frac{1}{\gamma_{\varphi_0}}\varphi_0$.
Using (\ref{311}), there is a triple $(\Hil_0,\xi_0,\pi_0)$ such that
\be
\varphi_0(a,b)=\langle\pi_0(a)\xi_0,\pi_0(b)\xi_0\rangle_0,
\label{312}\en
$\forall a,b\in\A$ and where $\langle\cdot,\cdot\rangle_0$ is the scalar product in $\Hil_0$. Now, what is interesting for us is that $\varphi_1$ does not necessarily produces another triple $(\Hil_1,\xi_1,\pi_1)$, as we will now show. The reason is in formula (\ref{34}), which shows that there is a relation between $\varphi_1$ and $\varphi_0$:
$$
\varphi_1(a,b)=\varphi_0(ay_0,by_0)=\langle\pi_0(ay_0)\xi_0,\pi_0(by_0)\xi_0\rangle_0=\langle\pi_0(a)\pi_0(y_0)\xi_0,\pi_0(b)\pi_0(y_0)\xi_0\rangle_0=
$$
$$
=\langle\pi_0(a)\xi_1,\pi_0(b)\xi_1\rangle_0,
$$
where we have defined $\xi_1=\pi_0(y_0)\xi_0\in\Hil_0$. Similarly we have
$$
\varphi_2(a,b)=\varphi_0(ay_0^2,by_0^3)=\langle\pi_0(ay_0^2)\xi_0,\pi_0(by_0^2)\xi_0\rangle_0=\langle\pi_0(a)\xi_2,\pi_0(b)\xi_2\rangle_0,
$$
where {$\xi_2=\pi_0(y_0^2)\xi_0=\pi_0(y_0)^2\xi_0\in\Hil_0$.} And so on. This shows that the triple $(\Hil_0,\xi_0,\pi_0)$ is sufficient to deal with all the forms $\varphi_l$, $l=0,1,2,3,\ldots$, and that, in general
\be
\varphi_l(a,b)=\langle\pi_0(a)\xi_l,\pi_0(b)\xi_l\rangle_0,
\label{313}\en
$\forall a,b\in\A$, $l=0,1,2,3,\ldots$. Here $\xi_l=\pi_0(y_0)^l\xi_0\in\Hil_0$. It is interesting to observe now that the various $\xi_l$ satisfy an orthogonality condition. To show this aspect we start rewriting
\be
 \langle\xi_k,\xi_l\rangle_0=\langle\pi_0(y_0)^k\xi_0,\pi_0(y_0)^l\xi_0\rangle_0=\varphi_0(y_0^k,y_0^l)
\label{314}\en
where we have used, in particular, (\ref{312}), and the fact that $\pi_0(y_0)^n=\pi_0(y_0^n)$, for all $n\geq0$. Hence, for instance,
$
 \langle\xi_1,\xi_0\rangle_0=\varphi_0(y_0,e)=\varphi_0(e,y_0^*)
$. Now, if $y_0^*=x_0$, we have $\varphi_0(e,y_0^*)=\varphi_0(e,x_0)=0$ because of (\ref{32}). Therefore $\langle\xi_1,\xi_0\rangle_0=0$. Similarly we could check that $\langle\xi_l,\xi_0\rangle_0=0$, $\forall l\geq1$, or that $\langle\xi_2,\xi_1\rangle_0=0$. In this case we have, assuming again that $y_0=x_0^*$ and recalling that $x_0y_0=e+qy_0x_0$,
$$
\langle\xi_2,\xi_1\rangle_0=\varphi_0(y_0^2,y_0)=\varphi_0(y_0,x_0y_0)=\varphi_0(y_0,e)+q\varphi_0(y_0,y_0x_0)=
$$
$$
=\varphi_0(e,x_0)+q\varphi_0(x_0y_0,x_0)=0.
$$
These results can be extended:
\begin{prop}\label{prop9}
	If $y_0=x_0^*$ then 
	\be
	\langle\xi_k,\xi_l\rangle_0=0
	\label{315}\en
	for all $k\neq l$.
\end{prop}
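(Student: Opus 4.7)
The plan is to reduce the inner product to $\varphi_0$ via (3.14), use the third relation in (2.1) (which, thanks to $y_0^* = x_0$, lets us push a $y_0$ from one slot to an $x_0$ in the other slot), and then iterate. The engine of the iteration will be a $q$-commutation identity for $x_0$ with powers of $y_0$.

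More explicitly, I would first establish by induction on $l\geq1$ the identity
\be
x_0\,y_0^{\,l}=\beta_l\,y_0^{\,l-1}+q^{l}\,y_0^{\,l}\,x_0,
\label{plan-comm}\en
where $\beta_l$ is the sequence from (3.5). The base case $l=1$ is just the $q$-mutator (3.1). The inductive step multiplies \eqref{plan-comm} on the right by $y_0$ and uses (3.1) once more; after collecting like terms, the coefficient of $y_0^{\,l}$ becomes $1+q\beta_l=\beta_{l+1}$, while the coefficient of $y_0^{\,l+1}x_0$ becomes $q^{l+1}$, as required.

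Now, without loss of generality I may assume $k>l$, since the case $k<l$ follows from the Hermiticity of the scalar product. Starting from (3.14) and applying (2.1)$_3$ with $y_0^*=x_0$, I get
\beano
\langle\xi_k,\xi_l\rangle_0=\varphi_0(y_0^{\,k},y_0^{\,l})
=\varphi_0(y_0^{\,k-1},x_0y_0^{\,l}).
\enano
Substituting \eqref{plan-comm} inside the form and using linearity,
\beano
\varphi_0(y_0^{\,k-1},x_0y_0^{\,l})
=\beta_l\,\varphi_0(y_0^{\,k-1},y_0^{\,l-1})
+q^{l}\,\varphi_0(y_0^{\,k-1},y_0^{\,l}x_0).
\enano
The second summand vanishes by (3.2bis) applied with $b=y_0^{\,k-1}$ and $c=y_0^{\,l}$, leaving the single recursion $\varphi_0(y_0^{\,k},y_0^{\,l})=\beta_l\,\varphi_0(y_0^{\,k-1},y_0^{\,l-1})$.

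Iterating this $l$ times reduces the problem to
\beano
\langle\xi_k,\xi_l\rangle_0=\beta_l\beta_{l-1}\cdots\beta_1\,\varphi_0(y_0^{\,k-l},e),
\enano
and one last application of (2.1)$_3$ gives $\varphi_0(y_0^{\,k-l},e)=\varphi_0(y_0^{\,k-l-1},x_0)$, which is zero by (3.2) since $k-l\geq1$. The main technical point is really just the identity \eqref{plan-comm}: once that is in hand the rest is formal manipulation with (2.1), (3.2) and (3.2bis). A minor thing to be careful about is the case $k-l=1$, where the last step collapses to $\varphi_0(e,x_0)=0$, which is still (3.2) with $b=e$.
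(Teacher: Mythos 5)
Your proof is correct, but it takes a genuinely different route from the paper's. The paper argues spectrally and very briefly: since $y_0=x_0^*$ forces $n_0=n_0^*$, Proposition \ref{prop8} together with (\ref{34}) gives $\pi_0(n_0)\xi_l=\beta_l\xi_l$, and the $\xi_l$ are then orthogonal as eigenvectors of a symmetric operator associated with \emph{distinct} eigenvalues $\beta_l$. You instead compute $\langle\xi_k,\xi_l\rangle_0=\varphi_0(y_0^k,y_0^l)$ head-on: your commutation identity $x_0y_0^{\,l}=\beta_l y_0^{\,l-1}+q^l y_0^{\,l}x_0$ is stated and proved correctly (in the inductive step the coefficient of $y_0^{\,l}$ arises as $\beta_l+q^l$, which indeed equals $1+q\beta_l=\beta_{l+1}$), the passage $\varphi_0(y_0^{\,k},y_0^{\,l})=\varphi_0(y_0^{\,k-1},x_0y_0^{\,l})$ is a legitimate use of (\ref{21})$_3$ with $x,y\in\Ao$, the term $\varphi_0(y_0^{\,k-1},y_0^{\,l}x_0)$ is correctly killed by (\ref{32bis}), and the resulting recursion $\varphi_0(y_0^{\,k},y_0^{\,l})=\beta_l\,\varphi_0(y_0^{\,k-1},y_0^{\,l-1})$ reduces everything to the vacuum condition (\ref{32}). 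Each approach buys something. The paper's is shorter and more conceptual, but it hinges on the claim that $\beta_l\neq\beta_k$ for $l\neq k$, which actually fails for $q=0$ (where $\beta_l=1$ for all $l\geq1$) and for $q=-1$ (where the $\beta_l$ alternate between $0$ and $1$); your computation never needs the eigenvalues to be distinct, so it covers all $q$ uniformly, and as a by-product it also delivers the explicit normalization $\langle\xi_l,\xi_l\rangle_0=\beta_l!\,\varphi_0(e,e)$. Your reduction of the case $k<l$ to $k>l$ via Hermitian symmetry of the form is fine, as is the separate remark about $k-l=1$.
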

\begin{proof}
	It is enough to observe that, if $y_0=x_0^*$, then $n_0=n_0^*$. Then, because of Proposition \ref{prop8} and of (\ref{34}), we have $\pi_0(n_0)\xi_l=\beta_l\xi_l$. Then our claim follows with standard steps from the facts that $\pi_0$ is a *-representation, and that $\beta_l\neq\beta_k$ whenever $l\neq k$.

\end{proof}

\vspace{2mm}

{\bf Remarks:--} (1) This proposition suggests a possible way to define a notion of orthogonality of the forms $\varphi_l$: we could say that these are {\em orthogonal} because their corresponding vectors $\xi_l$ are orthogonal in $\Hil_0$. However, this definition is not really easy to extend to arbitrary forms $\Omega$ and $\tilde \Omega$, since there is no reason a priori which guarantees that they produce a single Hilbert space when constructing their related GNS-like representations, which was a key feature in our construction.

(2) The condition $y_0=x_0^*$ is rather important, since it implies that $n_0=y_0x_0=n_0^*$, which is used in the proof of Proposition \ref{prop9}. If $y_0\neq x_0^*$, then there is no reason to expect that the orthogonality between the various $\xi_l$ has to be true. However, in this case, we could rather imagine that a second family $\eta_k\in\Hil_0$ can be defined, which is biorthogonal to the $\xi_l$: 	$\langle\xi_l,\eta_k\rangle_0=0$, if $k\neq l$. This is a rather common feature in non-Hermitian quantum mechanics, \cite{bagspringer}-\cite{bender}, and it is exactly what we will briefly consider below.

(3) It remains open the possibility of using the norm of sesquilinear forms to check their mutual orthogonality, as proposed in a different context in \cite{pede,denittis}. This approach does not look so natural for us, but a deeper understanding of this possibility is part of our future work.

\vspace{2mm}

In Proposition \ref{prop9} we have assumed that $y_0=x_0^*$. Let us briefly consider what happens when $y_0\neq x_0^*$. This is interesting, and can be seen as going from, say, bosons (or fermions, or quons) to pseudo-bosons (or pseudo-fermions, or pseudo-quons), \cite{bagspringer}.

In this case, in complete analogy with what we have done in (\ref{32}) and after, we assume that a second element $\eta_0\in\Sc_c$ exists which is an eigenstate of $y_0^*$ with zero eigenvalue:

\be
\eta_0(b,y_0^*)=0,
\label{316}\en
$\forall b\in\A$. It is clear that, if $x_0=y_0^*$, then $\eta_0$ and $\varphi_0$ collapse, while in general they are expected to be different forms. Then, calling
	\be
	\eta_l(a,b)=\varphi_{l-1}(ax_0^*,bx_0^*),
	\label{317}\en
	$a,b\in\A$, $l=1,2,3,\ldots$, we have 
	\be\eta_l(a,n_0^*)=\beta_l\eta_l(b,e),
	\label{318}\en
	$\forall b\in\A$. The counterpart of (\ref{36}) is now
	\be
	\eta_l(ay_0^*,by_0^*)=\beta_l^2\eta_{l-1}(a,b),
	\label{319}\en	
	$\forall l\geq1$ and $a,b\in\A$. We can consider a GNS-like construction for each $\eta_l$. However, they are all  related to a single GNS, the one we construct out of $\eta_0$, $\pi^{(\eta)}_0$, analogously to (\ref{313}):
	\be
	\eta_0(a,b)=\langle\pi^{(\eta)}_0(a)\nu_0,\pi^{(\eta)}_0(b)\nu_0\rangle_{\eta_0},
	\label{320}\en
$a,b\in\A$, and the scalar product is the one in $\Hil_{\eta_0}$. Moreover, $\nu_l=\pi^{(\eta)}_0({x_0^*})^l\nu_0$. It is now a trivial exercise to show that, if $(\Hil_0,\xi_0,\pi_0)=(\Hil_{\eta_0},\nu_0,\pi^{(\eta)}_0)$, then
\be
	\langle\xi_k,\nu_l\rangle=0,
\label{321}\en
for all $k\neq l$. Here $\langle\cdot,\cdot\rangle$ is the scalar product of $\Hil_0=\Hil_{\eta_0}$. Hence the two sets of vectors are biorthogonal, and then we say that also the sesquilinear forms they are derived from are biorthogonal. However, it should be stressed that requiring  that $(\Hil_0,\xi_0,\pi_0)=(\Hil_{\eta_0},\nu_0,\pi^{(\eta)}_0)$ is a strong assumption, and it is not guaranteed at all, in general.

\section{Conclusions}\label{sect4}

In this paper we have proposed a possible extension of the notion of eigenvectors in the context of Banach quasi *-algebra, and we have deduced several properties of these specific sesquilinear forms, our eigenstates. An interesting result is that, if $\varphi$ is an eigenstate of a certain $a\in\A$ with eigenvalue $\lambda$, it is possible to define another form, $\psi$, as in (\ref{29}), which is eigenstate of $a^*$ with eigenvalue $\overline{\lambda}$.

As an explicit application, we have constructed a family of eigenstates of a number-like operator defined in terms of certain ladder elements which obey suitable commutation relations of the q-uonic type. This particular application allows us to introduce the notion of orthogonality and of biorthogonality for our sesquilinear forms, by using the GNS representation defined by these forms.

Many other aspects are still open and interesting in our research, both from a mathematical side and for possible physical applications. The possibility of defining coherent forms is just one of these: we have already done some steps here, see (\ref{37}), but there is still a long way to go. The orthogonality, and the biorthogonality, at the level of forms, rather than in representation, is also a quite intriguing aspect. And possible physical applications, possibly to ladder elements of different nature, look interesting. So, our analysis is in progress.

\section*{Acknowledgements}

F. B. and S. T. acknowledge partial financial support from Palermo University, via FFR2021 "Bagarello" and "Triolo", and from G.N.F.M.  and G.N.A.M.P.A. of the INdAM. H. I. acknowledges partial financial support from {the Università degli Studi di Palermo, by the program  {\em CORI-2021-D3-D26-160891-Azione D3}.}

\section*{Data accessibility statement}

This work does not have any experimental data.

%
%
%
%
%

\section*{Funding statement}

F. B. and S. T. acknowledge partial financial support from Palermo University, via FFR2021 "Bagarello" and "Triolo", and from G.N.F.M.  and G.N.A.M.P.A. of the INdAM. H. I. acknowledges partial financial support from {the Università degli Studi di Palermo, by the program  {\em CORI-2021-D3-D26-160891-Azione D3}.}

\renewcommand{\theequation}{A.\arabic{equation}}

\section*{Appendix A: quons in $\Hil$}\label{appendixA}

This appendix is devoted to a
short review on the standard results on quons, which are taken mainly from Refs.
\cite{moh,fiv,gree}.

The standard commutation rule for a single mode quon is
\be
a_qa_q^\dagger-qa_q^\dagger a_q=\1, \qquad  q\in[-1,1].
\label{a1}
\en
We see that this relation interpolates between bosons ($q=1$) and fermions ($q=-1$). 

If $\Phi_0$ is the vector of the Hilbert
space $\Hil$ annihiled by the annihilation quon operator $a_q$, $a_q \Phi_0=0$,
then the set of the  vectors defined recursively by 
\be
\Phi_{n+1}=\frac{1}{\gamma_n}a_q^\dagger\Phi_{n},  \label{a2}
\en
is an orthonormal basis for $\Hil$. The value of the normalization constant
$\gamma_n$ depends on $q$ and $n$ through the expression 
\be
\gamma_n^2\! =\left\{
\begin{array}{ll}
	\frac{1-q^{n+1}}{1-q}, \hspace{3cm} \mbox{if } q\neq 1,\\
	n+1, \hspace{3.2cm} \mbox{if } q=1.
\end{array}
\right.
\label{a3}
\en
Defining the self-adjoint operator
$
N_q=a_q^\dagger a_q,
$
it is easy to see that
$$
N_q \Phi_{n}= \gamma_{n-1}^2\Phi_{n},
$$
where $\gamma_{-1}:=0$.

The operator $a_q$, is bounded for any $q$ in $[-1,1[$. Indeed we have
$$
\|a_q\|^2=\sup_{\Psi\in\Hil: \|\Psi\|\leq1}<a_q\Psi,a_q\Psi>=
\sup_{\Psi\in\Hil: \|\Psi\|\leq1}<\Psi,N_q\Psi>=\sup_{d\in
	l^2:\sum|d_k|^2\leq1}\sum_k|d_k|^2\gamma_{k-1}^2.
$$
Here we have written $\Psi=\sum_kd_k\Phi_k$, using the fact that the vectors
$\Phi_k$'s form an orthonormal basis of $\Hil$. With $l^2$ we  indicate 
the usual Hilbert space of the square-integrable sequences. Therefore we have:
\begin{itemize} \item{}
	$q=1 \Rightarrow \gamma_{k-1}^2=k$. This implies that $\|a_q\|=\infty$.
	\item{}
	$q=-1 \Rightarrow \gamma_{k-1}^2=0,1$ depending on whether the index $n$ is even
	or odd. However, in both cases, $\gamma_{k-1}^2\leq 1$, and therefore
	$$\|a_q\|^2\leq\sup_{d\in
		l^2:\sum|d_k|^2\leq1}\sum_k|d_k|^2=1.
	$$
	Of course for fermions we have in fact $\|a_q\|=1$.
	\item{}
	for general $q\in]-1,1[$ we see that $\gamma_{k-1}^2\leq \frac{2}{1-q}$,
	independently of $n$. As a consequence, it can be shown that $\|a_q\|$ is bounded
	by $\frac{2}{1-q}$.
\end{itemize}
Of course, if $a_q$ is bounded, then $a_q^\dagger$ and $N_q=a_q^\dagger a_q$ are bounded as well.

Many more information on quons can be found in \cite{moh,fiv,gree}. What is relevant for us here is to notice that quons give rise to bounded ladder operators acting on an infinite-dimensional Hilbert space.

\end{document}